\newtheorem{defn}{Definition} 
\newtheorem{df}{Definition}
\newtheorem{lem}[defn]{Lemma}
\newtheorem{cor}{Corollary}
\newtheorem {rem}{Remark}
\newtheorem{con}{Conjecture}
\title{Matrix approach to synchronizing automata}
\date{}
\author{A.N. Trahtman\thanks{Email: avraham.trakhtman@gmail.com}
\institute{22.10.2019}
}
\begin{document}

\maketitle
  \begin{abstract}
  A word $w$ of letters on edges of underlying graph $\Gamma$ of deterministic finite automaton (DFA) is called synchronizing if
$w$ sends all states of the automaton to a unique state.

J. \v{C}erny discovered in 1964 a sequence of $n$-state complete
DFA possessing a minimal synchronizing word of length $(n-1)^2$.
The hypothesis, well known today as \v{C}erny conjecture, claims
that $(n-1)^2$ is a precise upper bound on the length of such a
word over alphabet $\Sigma$ of letters on edges of
$\Gamma$ for every complete $n$-state DFA. The hypothesis was
formulated distinctly in 1966 by Starke.

A special classes of matrices induced by words in the alphabet of labels on edges of the underlying  graph of DFA
are used for the study of  synchronizing automata.

\end{abstract}

$\bf Keywords $ deterministic finite automata, synchronizing
word, \v{C}erny conjecture.

\section*{Introduction}

 The problem of synchronization of DFA is a natural one and various aspects of this problem have been touched in the literature.
The connections with the early coding theory and first efforts to
estimate the length of synchronizing word look in the works
\cite{La}, \cite{Li}. Prehistory of the topic, the emergence of
the term, different problems of synchronization one can find in
surveys  \cite{Ju}, \cite{KV}, \cite{Vo1}, \cite{Vo}.

Synchronization makes the behavior of an automaton resistant
against input errors since, after detection of an error, a
synchronizing word can reset the automaton back to its original
state, as if no error had occurred. The synchronizing word limits
the propagation of errors for a prefix code. Deterministic finite
automaton is a tool that helps to recognise language in a set of
DNA strings.

A problem with a long story is the estimation of the minimal length of synchronizing word.

 J. \v{C}erny in 1964 \cite{Ce} found the sequence of $n$-state complete DFA with shortest synchronizing word of length $(n-1)^2$
 for an alphabet of size two.
  The hypothesis, well known today as the \v{C}erny's conjecture, claims that this lower bound on the length of the synchronizing word
  of aforementioned automaton is also the upper bound for the shortest synchronizing word of any $n$-state complete DFA:

\begin{con}
The deterministic complete $n$-state synchronizing automaton over alphabet $\Sigma$ has synchronizing word in $\Sigma$
of length at most $(n-1)^2$ \cite{Sta} (Starke, 1966).
  \end{con}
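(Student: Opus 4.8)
\medskip\noindent\textbf{Proof plan.}
The plan is to pass from the combinatorics of the automaton to products of $0$--$1$ stochastic matrices and then to control how many letters are needed to collapse a fixed $(n-1)$-dimensional space. To each letter $a\in\Sigma$ assign the $n\times n$ matrix $M_a$ whose $(p,q)$ entry is $1$ exactly when the edge labelled $a$ goes from state $p$ to state $q$; since the automaton is deterministic and complete, every row of $M_a$ has a single $1$. A word $w=a_1\cdots a_k$ then corresponds to the product $M_w=M_{a_1}\cdots M_{a_k}$, still of this form, and $w$ is synchronizing iff all rows of $M_w$ coincide. Because $M_a\mathbf 1=\mathbf 1$ for every $a$, the subspace $W$ of rational row vectors $v$ with $v\mathbf 1=0$, of dimension $n-1$, is invariant under every right multiplication $v\mapsto vM_a$; moreover $\dim(WM_w)=|Qw|-1$, where $Q$ is the set of states and $Qw$ its image under $w$, and $w$ is synchronizing precisely when $WM_w=\{0\}$. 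In these terms the conjecture reads: \emph{if some word annihilates $W$, then one of length at most $(n-1)^2$ does.}

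\smallskip
Next I would take a synchronizing word $w=a_1\cdots a_m$ of minimum length and examine the dimensions $d_k=\dim(WM_{a_1\cdots a_k})$. This sequence is nonincreasing, with $d_0=n-1$ and $d_m=0$, so the letters split into at most $n-1$ maximal blocks of constant $d_k$, each block ending with a genuine drop of dimension, equivalently with a merge of two states. The whole problem thus reduces to a \emph{single-step estimate}: from a current image subspace $U=WM_u$ of dimension $d$, how long a word is needed to reach an image of dimension strictly below $d$? If this length can be kept of order $d$ at each of the at most $n-1$ steps, a telescoping sum designed to reproduce the Černý series $(n-1)^2$ should follow, the very first step being special. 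To attack the single step I would fix a witness $e_p-e_q\in U$ coming from two states still present in the current image and follow its orbit $e_p-e_q,\ (e_p-e_q)M_{a_1},\ (e_p-e_q)M_{a_1}M_{a_2},\dots$; either a linear dependence appears along the orbit --- which, because the $M_a$ are $0$--$1$ and send basis vectors to basis vectors, forces two states to merge and hence $|Qw|$ to fall --- or the iterates acquire strictly fewer nonzero coordinates. Showing that one of these happens within about $d$ steps, and inducting on $d$, would close the argument.

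\smallskip
The step I expect to be the real obstacle --- and it is precisely what has kept the conjecture open --- is making this single-step estimate tight: forcing the image dimension to drop using a word whose length is essentially the current dimension, rather than the quadratic bound yielded by a breadth-first search in the power automaton. The matrix reformulation turns this into a question about the exact rate at which the iterates $vM_{a_1}\cdots M_{a_k}$ become linearly dependent, and a bare dimension count is visibly too weak; the argument must exploit that the $M_a$ are not arbitrary linear contractions of $W$ but the special class of matrices coming from genuine self-maps of $Q$ --- their common fixed vector $\mathbf 1$, their integrality, and the fact that the columns of each $M_w$ are the indicator vectors of the fibres of $w$. A subsidiary difficulty is to justify the block decomposition rigorously for a minimum-length word, which I would treat by an exchange (pumping) argument showing that a block realizing a dimension drop cannot be rerouted through a shorter word with the same effect.
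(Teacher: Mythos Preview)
The statement you are attempting to prove is stated in the paper as a \emph{conjecture}, not a theorem: it is the \v{C}ern\'y conjecture, open since the 1960s, and the paper offers no proof of it. The paper's contribution is a collection of matrix-theoretic lemmas about the spaces generated by word matrices $M_u$ and the auxiliary matrices $L_u$, culminating in dimension bounds such as Corollary~\ref{c2} (the space of $n\times(n-1)$ word matrices has dimension $(n-1)^2$), but nowhere does the paper assemble these into a proof of the conjecture.

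Your proposal is therefore not to be compared against a proof in the paper, because there is none. On its own merits, your write-up is an honest outline of the standard linear-algebraic reformulation (the invariant subspace $W=\{v:v\mathbf 1=0\}$, the rank sequence $d_k$, the block decomposition into at most $n-1$ rank drops), and you correctly identify the crux: bounding the length of each block by roughly the current dimension rather than by something quadratic. But you do not supply that bound; you explicitly flag it as ``the real obstacle'' and note that ``a bare dimension count is visibly too weak.'' That is exactly right, and it means what you have written is a proof \emph{plan} that stops at the known frontier, not a proof. The orbit argument you sketch --- following $(e_p-e_q)M_{a_1}\cdots M_{a_k}$ until a dependence forces a merge --- does not in general produce a merge within $d$ steps; linear dependence among these vectors does not by itself imply that two basis vectors have collided, since the iterates need not remain differences of basis vectors once a non-injective letter is applied. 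Absent a new idea at precisely this point, the approach recovers at best the classical cubic bounds (Frankl, Pin, Kljachko--Rystsov--Spivak), not $(n-1)^2$.
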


The problem can be reduced to automata with a strongly connected graph \cite{Ce}.

This famous conjecture is true for a lot of automata, but in general the problem still remains open although several hundreds
of articles consider this problem from different points of view \cite{TB}.
Moreover, two conferences "Workshop on Synchronizing  Automata"
(Turku, 2004) and  "Around the \v{C}erny conjecture"
(Wroclaw, 2008) were dedicated to this longstanding conjecture.
The problem is discussed on many sites on the Internet.

Together with the Road Coloring problem \cite{MS}, \cite{Fi}, \cite{PS}, this simple-looking conjecture was arguably
the most longstanding and famous open combinatorial problems in the theory of finite automata \cite{KV}, \cite{Pin}, \cite{Sta}, \cite{St}, \cite{Vo}.

The road coloring problem to find a labelling of the edges that turns the graph into a deterministic finite automaton
possessing a synchronizing word was stated in 1970 \cite{AW} and solved in 2008 \cite{TP}, \cite{TI}.

Examples of automata such that the length of the shortest
synchronizing word is greater than $(n-1)^2$ are unknown for
today. Moreover, the examples of automata  with shortest
synchronizing word of length $(n-1)^2$ are infrequent. After the
sequence of \v{C}erny and the example of \v{C}erny, Piricka and
Rosenauerova \cite{CPR} of 1971 for $|\Sigma|=2$, the next such
examples were found by Kari \cite {Ka} in 2001 for $n=6$ and
$|\Sigma|=2$ and by Roman \cite {Ro} for $n=5$ and $|\Sigma|=3$ in
2004.

The package TESTAS \cite {TS}, \cite {Tt} studied all automata with strongly connected underlying graph of size $n \le 11$ for
 $|\Sigma|=2$, of size $n \le 8$ for $|\Sigma| \le 3$ and of size $n \le 7$ for $|\Sigma| \le 4$ and found five new examples of
 DFA with shortest synchronizing word of length $(n-1)^2$ for $n\leq 4$.

 Don and Zantema present in \cite {DZ} an ingenious method of designing several new automata from existing examples of size three and four
 and proved that for $n\geq 5$ the method does not work.
So there are up to isomorphism exactly 15 DFA for $n=3$ and 12 DFA for $n=4$ with shortest synchronizing word of length $(n-1)^2$.
The authors of \cite {DZ} support the hypothesis from \cite{TS} that all automata with shortest synchronizing word of length $(n-1)^2$ are known,
of course, with essential correction found by themselves for
$n=3,4$.

There are several reasons \cite{AGV}, \cite{BBP}, \cite{CA},  \cite {DZ}, \cite{TS} to believe that the length of the shortest synchronizing word
for remaining automata  with $n>4$ (except the sequence of
\v{C}erny and two examples for $n=5, 6$) is essentially less and the gap grows with $n$.
For several classes of automata, one can find some estimations on the length in \cite{AGV}, \cite{GJV}, \cite{CR}, \cite{Kr}, \cite{KKS}, \cite{Ta}.

Initially found upper bound for the minimal length of
synchronizing word was big and
has been consistently improved over the years by different authors.
The upper bound found by Frankl in 1982 \cite{Fr} is equal to
$(n^3-n)/6$.
The result was reformulated in terms of synchronization in \cite{Pin} and repeated independently in \cite{KRS}.

The mentioned results for $(n^3-n)/6$ successfully use the matrix approach and the dimension of the arising spaces.
See also, for instance, \cite{BBP}, \cite{AS}, \cite{BS}, \cite{KV}, \cite{Jun12}, \cite{GJ16}, \cite{GGJ18} for this approach.

Nevertheless, the cubic estimation of the bound exists since 1982.

The considered deterministic automaton $A$ can be presented by a
complete underlying graph with edges labelled by letters of an
alphabet.

We consider a special class of matrices $M_u$ of mapping induced by words $u$ in the alphabet of letters on edges
of the underlying graph.
$M_u$ has one unit in every row and rest zeros.
The matrix of synchronizing word has units only in one column.

Our proof used some lemmas from \cite{TM}. For a complete picture of the proof, these lemmas after some modification are included in the proposed work.

Help Lemmas \ref{l2} and \ref{po} state that
the size of the set $R(u)$ of nonzero columns of the matrix $M_u$ is equal to the rank of $M_u$, $R(bu)\subseteq R(u)$ and
$|R(ub)| \leq |R(u)|$ for every word $b$.

Lemma \ref {v3} estimates the dimension of the space generated by matrices of words:
The set of all $n\times k$-matrices of words
for $k<n$ has at most $n(k-1)+1$ linear independent matrices.

In particular, the set of $n\times (n-1)$-matrices of words
has at most $(n-2)^2$ linear independent matrices. The famous value from the \v{C}erny hypothesis appears here.

Lemma \ref {lam} studied the nontrivial linear combination of matrices of words:
\begin{equation}
M_u =\sum_{i=1}^k\lambda_i M_{u_i}\to \sum^k_{i=1}\lambda_i =1.
\quad \sum^k_{i=1}\lambda_iM_{u_i}=0 \to \sum^k_{i=1}\lambda_i=0.
  \label{lmi}
\end{equation}

Lemma \ref {l3} notes distributivity by multiplication matrix from left on linear combination of matrices of word.

We study the rational series $(S,u)$ for matrix $M_u$ (see \cite{BR}), \cite{Be}.
This approach for synchronizing  automata supposed first by B{\'e}al \cite{Be} proved to be fruitful \cite{BBP}, \cite{CA}, \cite{Co}.

Lemma \ref{v4} and its Corollary \ref{c4} establish some algebraic properties of rational series of matrices of words,
for instance:

the matrices $M_{u}$ with constant $(S,u)=i$ generate a space
$V$ with $(S,t)=i$ for every nonzero matrix $M_t \in V$.

We consider the equations $M_uL_x=M_s$ (\ref{ux}) for synchronizing word $s=ux$ and solutions $L_x$
(Definition \ref{dL}) in Lemma \ref{l5}.

A connection between the set of nonzero columns of matrix of word, subsets of states of automaton
and of solutions $L_x$  of (\ref{ux}) is revealed in Remarks
\ref{r4}, \ref{r7}.

Lemmas \ref{l9}, \ref{l7} consider pseudoinverse matrices
(\ref{inv}) and their connection with equation (\ref{ux}).

The ideas of the approach are illustrated on examples of automata with a maximal length of synchronizing word from \cite{Ka}, \cite{Ce}, \cite{Ro}.
A sequence of words $u$ of growing length together with corresponding $n$-vector of subset of states obtained
by mapping of $u$ presents column $q$ of solutions from (\ref{ux}).
Some connection between the sequence of linear independent solutions ($n$-vector of subset of states) and subwords of
the minimal synchronizing word is easy to detect.

 \section*{Preliminaries}

We consider a complete $n$-state DFA with
 strongly connected underlying graph $\Gamma$ and transition semigroup $S$
 over a fixed finite alphabet $\Sigma$ of labels on edges of  $\Gamma$ of an automaton $A$.
The trivial cases $n \leq 2$, $|\Sigma|=1$ and $|A \sigma|=1$ for $\sigma \in\Sigma$ are excluded.

The restriction on strongly connected graphs is based on \cite{Ce}.
The states of the automaton $A$ are considered also as vertices of the graph $\Gamma$.

If there exists a path in an automaton from the state $\bf p$ to
the state $\bf q$ and the edges of the path are consecutively
labelled by $\sigma_1, ..., \sigma_k$, then for
$s=\sigma_1...\sigma_k \in \Sigma^+$ let us write ${\bf q}={\bf
p}s$.

Let $Px$ be the set of states ${\bf q}={\bf p}x$ for all ${\bf p}$
from the subset $P$ of states and $x \in \Sigma^+$.
Let $Ax$ denote the set $Px$ for the set $P$ of all states of the automaton.

 A word $s \in \Sigma^+ $ is called a {\it synchronizing (reset, magic, recurrent, homing, directable)} word
of an automaton $A$ with underlying graph $\Gamma$ if $|As|=1$.
The word $s$ below denotes minimal synchronizing word such that for a state $\bf q$ $As=\bf q$.

The states of the automaton are enumerated with number one for the fixed state $\bf q$.

 An automaton (and its underlying graph) possessing a synchronizing word is called {\it synchronizing}.

Let us consider a linear space generated by
$n\times n$-matrices $M$ with one unit in any row of the matrix and zeros everywhere else.

We connect a mapping of the set of states of the automaton
made by a word $u$ of $n\times n$-matrix $M_u$ such that for an element $m_{i,j} \in M_u$ takes place

\centerline{$m_{i,j}=1$ if the word $u$ maps ${\bf q}_i$ on ${\bf q}_j$ and $0$ otherwise.}

Any mapping of the set of states of the automaton  $A$ can be presented by some word $u$
and by a corresponding matrix $M_u$.
For instance,

 \centerline{$M_u=\left(
\begin{array}{ccccccc}
  0 & 0 & 1 & . & . & . &  0 \\
  1 & 0 & 0 & . & . & . &  0 \\
  0 & 0 & 0 & . & . & . &  1 \\
  . & . & . & . & . & . &  . \\
  0 & 1 & 0 & . & . & . &  0 \\
  1 & 0 & 0 & . & . & . &  0 \\
\end{array}\right)
$}

 Let us call the matrix $M_u$ of the mapping induced by the word $u$, for brevity, the matrix of word $u$, and vice versa,
$u$ is the word of matrix $M_u$.

Every square matrix with one unit in every row and zeros in remaining cells will be also considered as a matrix of word.

$M_uM_v=M_{uv}$ \cite{Be}.

The set of nonzero columns of $M_u$ (set of second indexes of its elements) of $M_u$ is denoted as $R(u)$.

The word $u$ of the matrix $M_u$ is called {\it irreducible}
if for every word $v$ obtained by removing some subword of $u$ $M_u \neq M_v$.

The minimal synchronizing word and all its subwords are irreducible.

The right word $x$ of synchronizing word $ux$ let us call right synchronizing continuation of $u$.

Zero matrix is a matrix of empty word.

The subset of states $Au$ of the set of all states of $A$ is denoted $c_u$ with number of states $|c_u|$.
In $n$-vector $c_u$ the coordinate $j$ has unit if
the state $j \in c_u$ and zero in opposite case.

For linear algebra terminology and definitions, see \cite{Ln},
\cite{Ma}.

\section{Mappings induced by a word and subword}

\begin{rem} \label{r1}
For every cell of $n\times n$-matrix of words in strongly connected automaton there is a matrix with unit in the cell.

Every unit in the product $M_uM_a$ is the product of two units, first unit from a row of $M_u$
and second unit from nonzero column of $M_a$.

The unit in the cell $(i,j)$ of the matrix of letter denotes the edge from the state $i$ to the state $j$,
for matrix of word, such unit denotes the path from $i$ to $j$.

The set $R(u)$ of nonzero columns of matrix $M_u$ corresponds the set of states $c_u$ of the automaton.
\end{rem}

\begin{lem}  \label {po}
For the set of states of deterministic finite automaton and any words $u$ and $w$ $Auw \subseteq Aw$.

For every word $w$, $R(u)\subset R(v)$ implies
$R(uw)\subseteq R(vw)$.

For the state ${\bf p} \not\in Aw$ takes place
 ${\bf p} \not\in Auw$.
 Nonzero columns of $M_{uw}$ have units also in $M_w$.

The number of nonzero columns $|R(b)|$ is equal to the rank of $M_b$.

\end{lem}

\begin{proof}
The properties of $Au \subseteq A$, $M_w$ and $M_{uw}$ follow from the definition of the matrix of word.

The set of nonzero columns of matrix defines a set of states.
The mapping by word $w$ of a set of states [columns $R(v)$] induces a mapping of its subset [columns $R(u)$].

For any word $u$ and the zero column of $M_w$
the corresponding column of $M_{uw}$ also consist of zeros.
Hence nonzero columns of $M_{uw}$ have units also in $M_w$.

The matrix $M_b$ has submatrix with only one unit in every row and every nonzero column with nonzero determinant.
Therefore $|R(b)|$ is equal to the rank of $M_b$.

\end{proof}

\begin{cor}  \label{c1}
The matrix $M_s$ of word $s$ is synchronizing if and only if $M_s$ has zeros in all columns except one and units in the residuary column.

All matrices of right subwords of $s$ also have at least one unit in this column.
\end{cor}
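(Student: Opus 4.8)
The plan is to unwind the definitions on both sides of the equivalence and then invoke Lemma~\ref{l2} for the claim about suffixes. For the forward implication, suppose $s$ is synchronizing, so $|As|=1$; then there is a single state $\mathbf{q}_j$ with $\mathbf{q}_i s=\mathbf{q}_j$ for every $i$. By the defining rule $m_{i,j}=1$ exactly when the word maps $\mathbf{q}_i$ onto $\mathbf{q}_j$, every row of $M_s$ then carries its unit in column $j$, so column $j$ is all units while every other column is zero.

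For the converse, assume $M_s$ has zeros in every column except column $j$ and units throughout column $j$. Reading the matrix row by row yields $\mathbf{q}_i s=\mathbf{q}_j$ for all $i$, hence $As=\{\mathbf{q}_j\}$ and $s$ is synchronizing. Thus the two conditions are equivalent, and in either case $j$ is precisely the column attached to the state $\mathbf{q}=\mathbf{q}_j$ onto which $s$ collapses $A$.

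For the last assertion, write a nonempty right subword of $s$ as $v$, so that $s=uv$ for some (possibly empty) word $u$ and $M_s=M_uM_v=M_{uv}$. Column $j$ is a nonzero column of $M_s=M_{uv}$, so by Lemma~\ref{l2} (nonzero columns of $M_{uw}$ have units also in $M_w$) column $j$ is already nonzero in $M_v$; that is, $M_v$ has at least one unit in this column, which is exactly the ``residuary'' column identified above.

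The argument is pure bookkeeping, and the only substantive ingredient, namely the monotonicity of the set of nonzero columns under left multiplication, has already been isolated in Lemma~\ref{l2}. Consequently I do not foresee any real obstacle: the statement is an immediate corollary of that lemma together with the definition of a synchronizing word.
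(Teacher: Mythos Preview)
Your proof is correct and matches the paper's intent: the paper states this as a corollary of Lemma~\ref{l2} without further argument, and what you have written is exactly the unwinding of definitions plus the one invocation of Lemma~\ref{l2} that justifies that label. There is nothing to add.
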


\begin{rem} \label{r4}
The rows of the matrix $M_aM_u$ are obtained by permutation of rows of the matrix $M_u$.
Some of these rows may disappear and replaced by another rows of $M_u$.

The invertible matrix $M_a$ does not change the number of units of every column of $M_u$ in its image of
the product $M_aM_u$.

The columns of the matrix $M_uM_a$ are obtained by permutation of columns $M_u$.
Some columns can be merged with $|R(ua)|<|R(u)|$.

\end{rem}

\begin{lem} \label{l2}
For every words $a$ and $u$

\centerline{$|R(ua)| \leq |R(u)|$ and }

\centerline{$R(au) \subseteq R(u)$.}
The matrix $M_a$ with $m$ units in column $r$ replicates row $r$ of $M_u$ $m$ times in $M_aM_u$.

For invertible matrix $M_a$  $R(au)=R(u)$ and  $|R(ua)|=|R(u)|$.

\end{lem}

\begin{proof}
The matrix $M_a$ in the product $M_uM_a$ shifts column of $M_u$
to columns of $M_uM_a$ without changing the column itself
(Remark \ref{r4}).

$M_a$ also can merge columns of $M_u$.
In view of possible merged columns, $|R(ua)|\leq |R(u)|$.

The zero columns $j$ of $M_a$ changes the row $j$ of $M_u$ in the product $M_aM_u$.

Let $M_a$ have $m$ units in column $r$.
These units and unit in row $r$ of $M_u$ create $m$ units in the product $M_aM_u$ in different rows of common column.
Therefore the matrix $M_a$ replicates the row $r$ of $M_u$ $m$ times in $M_aM_u$.

So some rows of $M_u$ can be replaced in $M_aM_u$ by row $r$ and therefore some rows from $M_u$ may disappear (Remark \ref{r4}).

Hence $R(au) \subseteq R(u)$ (See also Lemma \ref{po}).

For invertible matrix $M_a$ in view of existence $M_a^{-1}$ we have $|R(ua)|= |R(u)|$ and $R(au)= R(u)$.

\end{proof}

\section{The set of linear independent matrices of words}

\begin{rem} \label{r2}
The space generated by matrices of words has zero matrix of empty word.
\end{rem}

\begin{lem}  \label {v3}
 The set $V$ of all $n\times k$-matrices of words
(or $n\times n$-matrices with zeros in fixed $n-k$ columns for $k<n$) has $n(k-1)+1$ linear independent matrices.
 \end{lem}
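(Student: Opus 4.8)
The plan is to show that the linear span of $V$ is precisely the subspace $U$ consisting of all $n\times k$-matrices whose $n$ row sums coincide, and then to compute $\dim U$ in two independent ways. (Throughout we may take the first column to be among the $k$ columns allowed to be nonzero.) The only structural property of matrices of words used here is the defining one: each row carries exactly one unit, so every $M\in V$ has all of its row sums equal to $1$. Hence $V\subseteq U$, where $U=\{M:\,r_1(M)=\cdots=r_n(M)\}$ and $r_i(M)$ denotes the $i$-th row sum of $M$. The subspace $U$ is the kernel of the linear map $M\mapsto\bigl(r_1(M)-r_2(M),\,\dots,\,r_1(M)-r_n(M)\bigr)$ from the $nk$-dimensional space of all $n\times k$-matrices onto a space of dimension $n-1$; this map is plainly surjective, so $\dim U=nk-(n-1)=n(k-1)+1$. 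Therefore the linear span of $V$ has dimension at most $n(k-1)+1$, which already gives the ``at most'' assertion quoted in the Introduction.

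For the reverse inequality I would exhibit $n(k-1)+1$ linearly independent elements of $V$. Write $e_1,\dots,e_k$ for the rows of the $k\times k$ identity matrix. Let $M_0\in V$ have all of its rows equal to $e_1$ (the constant mapping onto the state ${\bf q}_1$; by the enumeration fixed in the Preliminaries this is the matrix of a synchronizing word). For $1\le i\le n$ and $2\le j\le k$ let $M_{i,j}\in V$ be the matrix whose $i$-th row equals $e_j$ and all of whose remaining rows equal $e_1$. Assume $\alpha M_0+\sum_{i,j}\beta_{i,j}M_{i,j}=0$. Reading off row $i$ of this identity, the coefficient of $e_j$ with $j\ge 2$ is exactly $\beta_{i,j}$, so every $\beta_{i,j}=0$; the surviving relation is $\alpha M_0=0$, whence $\alpha=0$. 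Thus $M_0$ together with the matrices $M_{i,j}$ form a linearly independent family of cardinality $1+n(k-1)$, so the span of $V$ has dimension at least $n(k-1)+1$. Combining the two bounds, the span of $V$ has dimension exactly $n(k-1)+1$; in fact it equals $U$.

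The one step that is not pure linear algebra is the assertion that $M_0$ and the single-deviation matrices $M_{i,j}$ are genuinely matrices of words of the automaton, and this is where I expect the real work: one leans on strong connectivity together with the existence of a synchronizing word (cf.\ Remark~\ref{r1}), or else on the standing convention that every mapping of the state set is induced by some word. Finally, specializing to $k=n-1$ turns the bound into $n(n-2)+1=(n-1)^2$, the value that appears in the \v{C}erny conjecture.
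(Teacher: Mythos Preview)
Your argument is correct, and the explicit basis you exhibit is essentially the paper's own: the paper chooses the last column $k$ as the ``absorbing'' column, writing $K$ for the all-ones-in-column-$k$ matrix and $V_{i,j}$ (for $j<k$) for the matrix whose row $i$ is $e_j$ and whose remaining rows are $e_k$; you do exactly the same with column $1$ in place of column $k$. The linear-independence verification is identical in spirit.

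Where you genuinely diverge is in the upper bound. The paper never introduces the row-sum subspace $U$; instead it proves directly that $\{K\}\cup\{V_{i,j}\}$ spans $V$ by writing an arbitrary $T\in V$ as $\sum_{t_{i,j}=1,\ j<k}V_{i,j}-(m-1)K$ for a suitable integer $m$. Your route via $\dim U=nk-(n-1)$ is cleaner and more conceptual: it identifies the span of $V$ with the affine-type constraint ``all row sums equal'' and reads off the dimension by rank--nullity, whereas the paper's explicit decomposition buys a concrete formula for expressing any matrix of word in the chosen basis. Both reach $n(k-1)+1$ and, for $k=n-1$, the value $(n-1)^2$.

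One remark: your closing worry about whether $M_0$ and the $M_{i,j}$ are realized by actual words of the automaton is not needed here. In this lemma ``matrices of words'' is being used in the structural sense (one unit per row, supported on the $k$ admissible columns); the paper's own proof freely uses $K$ and the $V_{i,j}$ without ever arguing that they arise from words over $\Sigma$.
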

\begin{proof}
Let us consider distinct $n\times k$-matrices of word with at most only one nonzero cell outside the last nonzero column $k$.

Let us begin from the matrices $V_{i,j}$ with unit in $(i,j)$ cell ($j<k$) and units in ($m,k$) cells for all $m$ except $i$.
The remaining cells contain zeros.
So we have $n-1$ units in the $k$-th column and only one unit in remaining $k-1$ columns of the matrix $V_{i,j}$.
Let the matrix $K$ have units in the $k$-th column and zeros in the other columns.
There are $n(k-1)$ matrices $V_{i,j}$. Together with $K$ they belong to the set $V$.
So we have $n(k-1)+1$ matrices. For instance,

\begin{picture}(0,40)
\end{picture}
$V_{1,1}=\left(
\begin{array}{cccccccc}
  1 & 0 & 0 & . & . & 0  \\
  0 & 0 & 0 & . & . & 1  \\
  0 & 0 & 0 & . & . & 1  \\
  . & . & . & . & . & .  \\
  0 & 0 & 0 & . & . & 1  \\
  0 & 0 & 0 & . & . & 1  \\
\end{array}
\right)$
\begin{picture}(4,40)
\end{picture}
$V_{3,2}=\left(
\begin{array}{cccccccc}
  0 & 0 & 0 & . & . & 1  \\
  0 & 0 & 0 & . & . & 1  \\
  0 & 1 & 0 & . & . & 0  \\
  . & . & . & . & . & .  \\
  0 & 0 & 0 & . & . & 1  \\
  0 & 0 & 0 & . & . & 1  \\
\end{array}
\right)$
\begin{picture}(4,40)
\end{picture}
$K=\left(
\begin{array}{cccccccc}
  0 & 0 & 0 & . & . & 1 \\
  0 & 0 & 0 & . & . & 1 \\
  0 & 0 & 0 & . & . & 1 \\
  . & . & . & . & . & . \\
  0 & 0 & 0 & . & . & 1 \\
  0 & 0 & 0 & . & . & 1 \\
\end{array}
\right)$

 The first step is to prove that the matrices $V_{i,j}$ and $K$ generate the space with the set $V$.
For arbitrary matrix $T$ of word from $V$ for every $t_{i,j} \neq 0$ and $j<k$,
let us consider the matrices $V_{i,j}$ with unit in the cell $(i,j)$ and the sum of them $\sum V_{i,j}=Z$.

The first $k-1$ columns of $T$ and $Z$ coincide.
   Hence in the first $k-1$ columns of the matrix $Z$ there is at most only one unit in any row.
 Therefore in the cell of $k$-th column of $Z$ one can find at most two values which differ by unit, say $m$ or $m-1$.
The value of $m$ appears if there are only zeros
in other cells of the considered row. Therefore $\sum V_{i,j} - (m-1)K=T$.

Thus, every matrix $T$ from the set $V$ is a span of
above-mentioned $(k-1)n +1$ matrices from $V$.
It remains now to prove that the set of matrices $V_{i,j}$ and $K$ is a set of linear independent matrices.

If one excludes a certain matrix $V_{i,j}$ from the set of these matrices, then it is impossible
to obtain a nonzero value in the cell $(i,j)$ and therefore to obtain the matrix $V_{i,j}$.
So the set of matrices $V_{i,j}$ is linear independent.
Every non-trivial linear combination of the matrices $V_{i,j}$ equal to a matrix of word has at least one nonzero element
in the first $k-1$ columns.
Therefore, the matrix $K$ could not be obtained as a linear combination of the matrices $V_{i,j}$.
Consequently the set of matrices $V_{i,j}$ and $K$ forms a basis of the set $V$.
\end{proof}

\begin{cor}  \label {c2}
The set of all $n \times(n-1)$-matrices of words
(or $n\times n$-matrices with zeros in a fixed column)
has $(n-1)^2$ linear independent matrices.
 \end{cor}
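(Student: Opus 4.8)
The plan is to derive this directly from Lemma \ref{v3}. That lemma asserts that the set $V$ of all $n\times k$-matrices of words (equivalently, $n\times n$-matrices of words with zeros forced in a fixed collection of $n-k$ columns) admits a basis of exactly $n(k-1)+1$ linearly independent matrices, realised concretely by the family $V_{i,j}$ together with the column matrix $K$. For the statement at hand we simply specialise to $k=n-1$: the "fixed $n-k$ columns" reduce to a single fixed column, so the set $V$ in this case is precisely the set of all $n\times(n-1)$-matrices of words, matching the hypothesis of the corollary.

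The only remaining step is the arithmetic. Substituting $k=n-1$ into the dimension count $n(k-1)+1$ gives $n(n-2)+1 = n^2 - 2n + 1 = (n-1)^2$. Hence the span of the matrices of words that have all entries zero in one fixed column has dimension exactly $(n-1)^2$, i.e. contains $(n-1)^2$ linearly independent matrices and no more. I would also remark, as the paper does in the introduction, that this is where the \v{C}erny value $(n-1)^2$ makes its appearance in the matrix picture.

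I do not expect any genuine obstacle: the content is entirely carried by Lemma \ref{v3}, and the corollary is a one-line specialisation plus an elementary polynomial identity. The only thing to be careful about is the bookkeeping of which columns are "fixed to zero" — one must check that forcing $n-k = 1$ columns to be zero really does give the full set of $n\times(n-1)$-matrices of words, which is immediate from the identification of an $n\times(n-1)$-matrix with an $n\times n$-matrix padded by a zero column. So the proof is essentially: \emph{apply Lemma \ref{v3} with $k=n-1$ and compute $n(n-2)+1=(n-1)^2$.}
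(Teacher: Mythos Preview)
Your proposal is correct and follows exactly the paper's approach: the paper's proof is a single line specialising Lemma~\ref{v3} to $k=n-1$ and computing $n(n-1-1)+1=(n-1)^2$. Your discussion is, if anything, more detailed than the paper's one-line derivation.
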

Proof. For $k=n-1$ it follows from $n(n-1-1)+1= (n-1)^2$.

\begin{cor}  \label {cp}
Suppose the vertex ${\bf  p} \not\in A \alpha$ and let words $u$ of matrices $M_u$ have the last letter $\alpha$.

Then there are at most $(n-1)^2$ linear independent matrices $M_u$.
 \end{cor}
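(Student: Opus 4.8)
The plan is to recognize the family $\{M_u : u \text{ has last letter } \alpha\}$ as a subset of the space treated in Corollary \ref{c2}, and then simply quote the dimension bound proved there.

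First I would write any admissible word as $u=v\alpha$ with $v$ a word, so that $M_u=M_vM_\alpha=M_{v\alpha}$ and, by Lemma \ref{l2} applied to the pair $v,\alpha$, one has $Au=(Av)\alpha\subseteq A\alpha$. Since by hypothesis ${\bf p}\not\in A\alpha$, this gives ${\bf p}\not\in Au$; for $u=\alpha$ itself the same conclusion holds directly from $Au=A\alpha$. By the correspondence between the set $R(u)$ of nonzero columns of $M_u$ and the set $Au$ of states noted in Lemma \ref{l2} and Remark \ref{r4}, the column of $M_u$ indexed by the fixed state ${\bf p}$ is identically zero.

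Hence every matrix $M_u$ with last letter $\alpha$ is an $n\times n$-matrix of word having zeros in the single fixed column ${\bf p}$, i.e. it lies in the set $V$ of Corollary \ref{c2} with the distinguished column taken to be ${\bf p}$. By that corollary the linear span of $V$ has exactly $(n-1)^2$ linearly independent matrices, so it is a space of dimension $(n-1)^2$. A subset of a space of dimension $(n-1)^2$ cannot contain more than $(n-1)^2$ linearly independent elements, whence the matrices $M_u$ in question furnish at most $(n-1)^2$ linearly independent matrices, as claimed.

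I do not anticipate a real obstacle: the only point needing a little care is the book-keeping that the vanishing column is the \emph{same} column ${\bf p}$ for every admissible $u$, which is immediate since ${\bf p}$ is a single fixed state and states are identified with column indices, together with the routine check that this reduction really lands inside the hypothesis of Corollary \ref{c2}.
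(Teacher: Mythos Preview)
Your proof is correct and follows essentially the same route as the paper: show via Lemma~\ref{l2} that the column indexed by ${\bf p}$ vanishes in every $M_u$ with last letter $\alpha$, and then invoke Corollary~\ref{c2}. The paper's argument is just the terse one-line version of what you wrote out in detail.
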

Proof. All matrices $M_u$ have common zero column ${\bf p}$ by Lemma \ref{po}.
So we have $n\times n$-matrices with zeros in a fixed column
and due to Corollary \ref{c2}
there are at most $(n-1)^2$ linear independent matrices $M_u$.

 \begin{cor}  \label {c3}
There are at most $n(n-1)+1$ linear independent matrices of words in the set of $n\times n$-matrices.
 \end{cor}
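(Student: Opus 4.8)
The plan is to exploit the single linear identity that every matrix of word satisfies, and then count, in the spirit of Lemma~\ref{v3} and Corollary~\ref{c2}. Write $\mathbf{1}$ for the all-ones column vector. Since $M_u$ has exactly one unit in every row, each row of $M_u$ sums to $1$, i.e. $M_u\mathbf{1}=\mathbf{1}$. Hence all matrices of words lie in the set $\mathcal{A}=\{\,M : M\mathbf{1}=\mathbf{1}\,\}$, which is an affine subspace of the space of all $n\times n$-matrices: it is a coset $M_v+W_0$ of the linear subspace $W_0=\{\,M : M\mathbf{1}=0\,\}$ for any fixed word $v$, and it does not contain the zero matrix (as $0\cdot\mathbf{1}=0\neq\mathbf{1}$; cf. the role of the empty word in Remark~\ref{r2}).

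Next I would compute $\dim W_0=n^2-n=n(n-1)$: the condition $M\mathbf{1}=0$ imposes exactly one linear equation per row, and these $n$ equations are independent, so they reduce the $n^2$-dimensional matrix space by precisely $n$. Since $\mathcal{A}=M_v+W_0$ with $M_v\notin W_0$, the linear span of $\mathcal{A}$ equals $W_0+\mathbb{R}M_v$ and has dimension $n(n-1)+1$. Any family of linearly independent vectors contained in $\mathcal{A}$ therefore has at most $n(n-1)+1$ members, and in particular this holds for any family of linearly independent matrices of words. Equivalently, and more concretely: given linearly independent matrices of words $M_{u_1},\dots,M_{u_m}$, the family $M_{u_1},\,M_{u_2}-M_{u_1},\dots,M_{u_m}-M_{u_1}$ is again linearly independent (the substitution is unitriangular, hence invertible), the $m-1$ difference matrices have all row sums equal to $0$ and so lie in $W_0$, giving $m-1\le\dim W_0=n(n-1)$, that is $m\le n(n-1)+1$.

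I do not expect a genuine obstacle here; the only point that requires attention is to keep linear and affine independence apart. The extra ``$+1$'' over the naive count $n(n-1)$ is exactly the contribution of the affine hull, and it is present precisely because the zero matrix is not the matrix of a (nonempty) word — the same dichotomy that underlies the statement $\sum\lambda_i\in\{0,1\}$ quoted in the introduction and the bound in Corollary~\ref{c2}. One could also note that the bound is sharp: the $n(n-1)$ matrices $V_{i,j}$ from the proof of Lemma~\ref{v3} (now allowed for all $j$ with $j<n$) together with the synchronizing matrix $K$ are $n(n-1)+1$ linearly independent matrices of words, but for the present statement only the upper estimate is needed.
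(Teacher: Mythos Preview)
Your argument is correct. The key observation that every matrix of a word has all row sums equal to $1$, hence lies in the affine hyperplane $\{M:M\mathbf{1}=\mathbf{1}\}$ of dimension $n(n-1)$, immediately yields the bound $n(n-1)+1$ on the dimension of the linear span, and therefore on the size of any linearly independent family of such matrices. The alternative ``difference'' formulation you give is a clean way to avoid any confusion between linear and affine independence.

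This is a genuinely different route from the paper's. In the paper, Corollary~\ref{c3} is stated without a separate proof and is meant to be read as the case $k=n$ of Lemma~\ref{v3}: there one exhibits an explicit basis, namely the $n(k-1)$ matrices $V_{i,j}$ together with the matrix $K$, shows they span all matrices of words with the given column support, and checks their independence. Your approach replaces this construction by a single linear constraint (row sums equal $1$) and a dimension count; in effect you are using Lemma~\ref{lam} (the identity $\sum\lambda_i\in\{0,1\}$) as the engine rather than Lemma~\ref{v3}. The trade-off is that the paper's constructive proof yields sharpness of the bound for free, whereas yours gives only the upper estimate---as you yourself note, sharpness would still require invoking the explicit family $V_{i,j},K$ from Lemma~\ref{v3}. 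For the corollary as stated, only the upper bound is claimed, so your argument suffices.
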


\begin{cor}  \label {c3a}
There are at most $n+1$ linear independent matrices of words in the set of matrices with 2 nonzero columns.
 \end{cor}

\begin{lem}\label{lam}
 Suppose that for nonzero matrices $M_u$ of word $u$ and $M_{u_i}$ of words $u_i$
\begin{equation}
M_u =\sum_{i=1}^k\lambda_i M_{u_i}. \label{lm}
\end{equation}
Then the sum $\sum^k_{i=1}\lambda_i =1$ and the sum $S_j$ of values in every row $j$ of the sum in (\ref{lm}) also is equal to one.

If $\sum^k_{i=1}\lambda_iM_{u_i}=0$  then $\sum_{i=1}^k \lambda_i=0$ and $S_j=0$ for every $j$ with $M_u=0$.

If the sum $\sum^k_{i=1}\lambda_i$ in every row is not unit
[zero] then $\sum_{i=1}^k\lambda_i M_{u_i}$
is not a matrix of word.
\end{lem}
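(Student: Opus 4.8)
The plan is to exploit the single structural fact that every matrix of a word has exactly one unit in each row, so the row sums of $M_u$ and of each $M_{u_i}$ are all equal to $1$. First I would sum the entries of row $j$ on both sides of (\ref{lm}). On the left this gives $1$. On the right it gives $\sum_{i=1}^k \lambda_i \cdot (\text{row sum of } M_{u_i} \text{ in row } j) = \sum_{i=1}^k \lambda_i$, independently of $j$. Hence $S_j = \sum_{i=1}^k \lambda_i = 1$ for every row $j$; this simultaneously proves $\sum_{i=1}^k \lambda_i = 1$ and that every row sum of the combination equals one.

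For the second assertion, I would argue identically: if $\sum_{i=1}^k \lambda_i M_{u_i} = 0$, then the sum of entries in any row of the zero matrix is $0$, while the same computation on the right yields $\sum_{i=1}^k \lambda_i$; so $\sum_{i=1}^k \lambda_i = 0$, and consequently $S_j = 0$ for every row $j$ (in particular for the rows where the target matrix $M_u = 0$, as stated).

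For the contrapositive-style third assertion, suppose $\sum_{i=1}^k \lambda_i M_{u_i}$ were a matrix of a word, say $M_w$. If $M_w \ne 0$, then by the first part (applied with $M_u = M_w$) we get $\sum_{i=1}^k \lambda_i = 1$; if $M_w = 0$, then by the second part (recalling from Remark~\ref{r2} that the zero matrix is the matrix of the empty word) we get $\sum_{i=1}^k \lambda_i = 0$. In either case $\sum_{i=1}^k \lambda_i \in \{0,1\}$, which is exactly the negation of the hypothesis; hence if $\sum_{i=1}^k \lambda_i \notin \{0,1\}$ the combination cannot be a matrix of any word.

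There is essentially no obstacle here: the argument is a one-line row-sum computation repeated in three settings. The only point requiring mild care is the bookkeeping in the $M_u = 0$ case of the second statement — one must be explicit that "$S_j = 0$ for every $j$" is read relative to the rows of the (zero) target matrix, and that the zero matrix legitimately counts as a matrix of a word via the empty-word convention, so that the trichotomy in the third part is genuinely exhaustive.
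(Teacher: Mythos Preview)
Your proposal is correct and follows essentially the same approach as the paper: both arguments hinge on the fact that every row of a matrix of a word sums to one, and then read off the three assertions from this single observation. The paper additionally first computes the total sum of all entries (getting $n = n\sum_i \lambda_i$) before doing the per-row count, but this is redundant with the row-sum argument you give, so the two proofs are effectively identical.
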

\begin{proof}
The nonzero matrices $M_u$ and $M_{u_i}$ have $n$ cells with unit in the cell.
Therefore, the sum of values in all cells of the matrix $\lambda_i M_{u_i}$ is $n \lambda_i$.

For nonzero $M_u$ the sum is $n$. So one has in view of
$M_u =\sum_{i=1}^k\lambda_i M_{u_i}$

\centerline {$n=n\sum_{i=1}^k \lambda_i$, whence $1 =\sum_{i=1}^k \lambda_i$.}
Let us consider the row $j$ of matrix $M_i$ in (\ref{lm}) and let  $1_i$ be unit in the row $j$.
The sum of values in a row of the sum (\ref{lm}) is equal to unit in the row of $M_u$.
So $1 =\sum_{i=1}^k \lambda_i1_i=\sum_{i=1}^k \lambda_i$.

$\sum_{i=1}^k\lambda_i M_{u_i}=0$ implies $S_j=\sum_{i=1}^k \lambda_i1_i=\sum_{i=1}^k \lambda_i=0$ for every row $j$.

If the matrix $M=\sum_{i=1}^k\lambda_i M_{u_i}$ is a matrix
of word or zero matrix then $\sum^k_{i=1}\lambda_i \in \{0, 1\}$.
If $\sum^k_{i=1}\lambda_i\not\in \{0, 1\}$
or the sum is not the same in every row then we have opposite case.
\end{proof}

\begin{lem} \label{l3} {\it Distributivity from left.}

For every words $b$ and $x_i$

\centerline{$M_b\sum\tau_iM_{x_i}=\sum\tau_iM_bM_{x_i}$.}
\end{lem}

\begin{proof}
The matrix $M_b$ shifts rows of every $M_{x_i}$ and of the sum of them in the same way according to Remark \ref{r4}.
$M_b$ removes common row of them and replace also by common row
(Remark \ref{r4}).

Therefore the matrices $M_bM_{x_i}$ has the origin rows of
$M_{x_i}$, maybe in another order,
and the rows of the sum $\sum\tau_iM_bM_{x_i}$ repeat rows of  $\sum\tau_iM_{x_i}$ also in the same order.

\end{proof}

Note that this is not always true on the right.

\section{Rational series}
The section follows ideas and definitions from \cite{BR} and \cite{Be}.
We recall that a formal power series with coefficients in a field $K$ and variables in $\Sigma$ is
a mapping of the free monoid $\Sigma^*$ into $K$ \cite{BR}, \cite{CA}.

We consider an $n$-state automaton $A$. Let $P$ denote the subset
of states of the automaton with the characteristic column vector
$P^t$ of $P$ of length $n$ having units in coordinates corresponding to the states of $P$ and zeros everywhere else.
Let $C$ be a row of units of length $n$.
 Following \cite{Be}, we denote by $S$ the {\it rational series} depending on the set $P$ defined by:
\begin{equation}
(S,u) = C M_uP^t-C P^t= C(M_u-E)P^t. \label{ser}
\end{equation}

\begin{rem} \label{r3}
Let $S$ be a rational series depending on the set $P$.

If the cell $i$ in $P^t$ has zero
then $(S,u)$ does not depend on column $i$ of $M_u$.
If this cell $i$ has unit then the column $i$ of $M_u$ with $k$ units from (\ref{ser}) added to $(S,u)$ the value of $k-1$.

For $k$ units in the column $q$ of $n\times n$-matrix $M_u$ and $P=\{\bf q\}$ $(S,u)=k-1$.

For $P$ nonzero columns of $n\times n$-matrix $M_u$ $(S,u)=n-|P|$.

\end{rem}

\begin{lem} \label{v4}
Let $S$ be a rational series depending on the
set $P$ of an automaton $A$.
Let $M_u=\sum_{j=1}^k\lambda_j M_{u_j}$.
Then $(S,u) =\sum_{j=1}^k\lambda_j (S,u_j)$.

If $(S,u_j)=i$ for every $j$ then also $(S,u)=i$.
\end{lem}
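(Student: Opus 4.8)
The plan is to reduce everything to the linearity of the map $M \mapsto C(M-E)P^t$ viewed as a function of the matrix $M$, and then exploit Lemma \ref{lam} to handle the affine correction term $CEP^t$. First I would write $(S,u) = CM_uP^t - CP^t$ and observe that $M \mapsto CMP^t$ is a linear functional on the space of $n\times n$-matrices: indeed it is a composition of matrix multiplications, hence additive and homogeneous. Applying this to the hypothesis $M_u = \sum_{j=1}^k \lambda_j M_{u_j}$ gives immediately $CM_uP^t = \sum_{j=1}^k \lambda_j\, CM_{u_j}P^t$. So far this is routine; the only subtlety is bookkeeping the constant term $CP^t$, which does not scale by $\sum \lambda_j$ automatically.

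Next I would insert $(S,u_j) = CM_{u_j}P^t - CP^t$, i.e. $CM_{u_j}P^t = (S,u_j) + CP^t$, into the equation above. This yields
\[
(S,u) = \sum_{j=1}^k \lambda_j\big((S,u_j) + CP^t\big) - CP^t = \sum_{j=1}^k \lambda_j (S,u_j) + \Big(\sum_{j=1}^k \lambda_j - 1\Big)CP^t .
\]
Here is where Lemma \ref{lam} enters: since $M_u$ is a nonzero matrix of a word and $M_u = \sum_{j=1}^k \lambda_j M_{u_j}$, that lemma gives $\sum_{j=1}^k \lambda_j = 1$, so the final bracketed term vanishes and we are left with $(S,u) = \sum_{j=1}^k \lambda_j (S,u_j)$, as claimed.

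For the second assertion, suppose $(S,u_j) = i$ for all $j$. Substituting into the formula just proved, $(S,u) = \sum_{j=1}^k \lambda_j \cdot i = i \sum_{j=1}^k \lambda_j = i$, again using $\sum_{j=1}^k \lambda_j = 1$ from Lemma \ref{lam}. The main (indeed only) obstacle is making sure the hypotheses of Lemma \ref{lam} genuinely apply — that $M_u$ is a nonzero matrix of a word, which is part of the standing assumption that $u$ is a word and $M_u$ is its matrix; once that is in hand the argument is a short linear-algebra computation. One should also note the edge case where $M_u$ could be the zero matrix (empty word): then Lemma \ref{lam} instead gives $\sum \lambda_j = 0$ and $(S,u) = CEP^t(0-1)\cdots$; but since $(S,\varepsilon) = C(E-E)P^t = 0$ and $\sum_j \lambda_j(S,u_j) = 0$ as well, the identity still holds, so the statement is valid without restriction.
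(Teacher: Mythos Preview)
Your proof is correct and follows essentially the same approach as the paper: both arguments use the linearity of $M\mapsto CMP^t$ and invoke Lemma~\ref{lam} to dispose of the affine correction coming from the constant term $CP^t$ (the paper phrases this as rewriting $E=\sum_j\lambda_j E$ and factoring, whereas you compute the correction $(\sum_j\lambda_j-1)CP^t$ explicitly and kill it). The second assertion is handled identically in both; your brief remark on the zero-matrix edge case is a harmless addition not present in the paper.
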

\begin{proof}
One has in view of (\ref{ser})

\centerline {$(S,u)= C(\sum^k_{j=1}\lambda_jM_{u_j}-E)P^t$}
where $C$ is a row of units and $P^t$ is a characteristic
column of units and zeros.

Due to Lemma \ref{lam}

$\sum^k_{j=1}\lambda_jM_{u_j}-E=\sum^k_{j=1}\lambda_jM_{u_j}-\sum^k_{j=1}\lambda_j E =
\sum^k_{j=1} \lambda_j(M_{u_j}-E)$.
So
$(S,u)=C(M_u-E)P^t = C(\sum^k_{j=1}\lambda_j M_{u_j}-E)P^t =
C(\sum^k_{j=1}\lambda_j (M_{u_j}-E))P^t=
 \sum^k_{j=1}\lambda_jC(M_{u_j}-E)P^t=
\sum^k_{j=1}\lambda_j(S, u_j)$.

Thus, $(S,u) =\sum_{j=1}^k\lambda_j (S,u_j)$.

If $\forall j$ $(S,u_j)=i$, then
$(S,u)= \sum^k_{j=1}\lambda_j i=i\sum^k_{j=1}\lambda_j=i$ by Lemma \ref{lam}.
\end{proof}

From Lemma \ref{v4} follows
\begin{cor} \label{c4}
Let $S$ be a rational series depending on the
set $P$ of an automaton $A$.

The matrices $M_{u}$ with constant $(S,u)=i$ generate a space
$V$ such that for every nonzero matrix $M_t \in V$ of word $t$ $(S,t)=i$.
\end{cor}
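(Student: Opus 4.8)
The plan is to deduce Corollary~\ref{c4} directly from Lemma~\ref{v4}. Let $V$ be the linear span of all matrices $M_u$ of words $u$ satisfying $(S,u)=i$, and let $M_t\in V$ be an arbitrary \emph{nonzero} matrix that happens itself to be a matrix of word $t$. The goal is to show $(S,t)=i$.

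First I would write $M_t$ explicitly as a finite linear combination $M_t=\sum_{j=1}^k\lambda_j M_{u_j}$ of the generators, where each $u_j$ is a word with $(S,u_j)=i$; this is possible precisely because $M_t\in V$. Next I would apply Lemma~\ref{v4} to this representation. That lemma gives $(S,t)=\sum_{j=1}^k\lambda_j(S,u_j)=\sum_{j=1}^k\lambda_j\, i = i\sum_{j=1}^k\lambda_j$. To finish, I need $\sum_{j=1}^k\lambda_j=1$, and this is exactly the conclusion of Lemma~\ref{lam}: since $M_t$ is a nonzero matrix of word expressed as $\sum_j\lambda_j M_{u_j}$, the coefficients sum to $1$. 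Hence $(S,t)=i\cdot 1=i$, as required.

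The only subtlety — and the step I would treat most carefully — is the hypothesis that $M_t$ is itself a matrix of word (not merely an arbitrary element of the span); without this we could not invoke Lemma~\ref{lam} to pin down $\sum\lambda_j=1$, and the value $(S,t)$ could be scaled arbitrarily. I would therefore state plainly that the claim concerns those elements of $V$ that are matrices of words, and note that the nonzero matrices of words in $V$ are exactly the ones for which the normalization $\sum\lambda_j=1$ holds automatically. No case analysis or computation beyond invoking the two prior lemmas is needed, so the proof is a short two-line deduction; the "work" was already done in Lemmas~\ref{lam} and~\ref{v4}.

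\begin{proof}
Let $V$ be the space generated by all matrices $M_u$ of words $u$ with $(S,u)=i$. Let $M_t\in V$ be a nonzero matrix of word $t$. Since $M_t\in V$, write $M_t=\sum_{j=1}^k\lambda_j M_{u_j}$ with $(S,u_j)=i$ for every $j$. By Lemma~\ref{lam}, $\sum_{j=1}^k\lambda_j=1$ because $M_t$ is a nonzero matrix of word. By Lemma~\ref{v4},
$$(S,t)=\sum_{j=1}^k\lambda_j(S,u_j)=i\sum_{j=1}^k\lambda_j=i.$$
Thus $(S,t)=i$ for every nonzero matrix $M_t\in V$ of word $t$.
\end{proof}
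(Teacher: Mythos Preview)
Your proof is correct and follows essentially the same route as the paper: the corollary is stated immediately after Lemma~\ref{v4} with only the remark ``From Lemma~\ref{v4} follows,'' and you have simply unpacked that derivation (making explicit the appeal to Lemma~\ref{lam} that is already buried in the proof of Lemma~\ref{v4}). In fact you could shorten your argument to a single citation of the second clause of Lemma~\ref{v4}, which already states that if $(S,u_j)=i$ for every $j$ then $(S,u)=i$.
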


\begin{cor}\label{c5}
Let $S$ be a rational series depending on the set $P$ of size one of $n$-state automaton.

Then the set $V$ of matrices $M_{u}$ with two fixed nonzero columns and fixed nonnegative $(S,u) <n-1$ has at most $n$ linear independent matrices.
\end{cor}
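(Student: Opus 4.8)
The plan is to use the very rigid shape forced on these matrices of word. Write $P=\{\mathbf q\}$ and let $q$ be the column index of $\mathbf q$. By Remark~\ref{r3} we have $(S,u)=k-1$, where $k$ is the number of units in column $q$ of $M_u$, so the hypothesis $(S,u)=i$ with $0\le i<n-1$ pins down $k=i+1\ge 1$; in particular column $q$ is one of the two prescribed nonzero columns, and we write $p$ for the other one. A matrix of word has exactly one unit in each row, and here all units lie in columns $p$ and $q$; hence column $q$ carries $i+1$ units and column $p$ carries the remaining $n-i-1\ge 1$ units. (This is where the strict inequality $i<n-1$ matters: it keeps $p$ a genuine nonzero column, so $p\neq q$ and the hypothesis is not vacuous.) Consequently such a matrix $M_u$ is completely determined by the set $T_u\subseteq\{1,\dots,n\}$ of rows whose unit sits in column $q$, and $|T_u|=i+1$ for every $u$ in our set.

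Next I would record the identity $M_u=\mathbf 1\,e_p^t+\chi_{T_u}\,(e_q^t-e_p^t)$, where $\mathbf 1$ is the all-ones column of length $n$, while $e_p^t$ and $e_q^t$ are the rows with a single unit in positions $p$ and $q$, and $\chi_{T_u}$ is the characteristic column of $T_u$; one checks directly that row $j$ of the right-hand side equals $e_q^t$ when $j\in T_u$ and $e_p^t$ otherwise. The decisive point is that the ``moving part'' $\chi_{T_u}$ always has the same coordinate sum $\mathbf 1^t\chi_{T_u}=|T_u|=i+1$. Fixing one word $u_0$ from the set, for every other $u$ in it
\[
M_u-M_{u_0}=(\chi_{T_u}-\chi_{T_{u_0}})\,(e_q^t-e_p^t),\qquad \mathbf 1^t(\chi_{T_u}-\chi_{T_{u_0}})=0,
\]
so every vector $\chi_{T_u}-\chi_{T_{u_0}}$ lies in the hyperplane $\{\,w:\mathbf 1^tw=0\,\}$ of dimension $n-1$. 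Hence the matrices $M_u-M_{u_0}$ span a subspace of dimension at most $n-1$, and the whole set $\{M_u\}$ lies in the affine flat through $M_{u_0}$ spanned by these differences.

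It follows that the linear hull of the matrices $M_u$ has dimension at most $(n-1)+1=n$, i.e. at most $n$ of them are linearly independent, which is the assertion. I do not anticipate a real obstacle; the only things that need a little care are keeping the row/column (transpose) bookkeeping straight in the identity for $M_u$, confirming that $i<n-1$ really does allow the two-nonzero-column configuration (so the statement is meaningful and $e_q^t-e_p^t\neq 0$), and noting that the ``affine dimension plus one'' estimate is applied correctly — the affine flat here does not contain the zero matrix, but the bound $\dim\le n$ holds in any case.
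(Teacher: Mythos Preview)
Your argument is correct. The paper takes a different, shorter route: it invokes Lemma~\ref{v3} with $k=2$ to get the bound $n(2-1)+1=n+1$ on the dimension of the space of word-matrices supported on two fixed columns, and then observes that the ``synchronizing'' matrix $M_w$ with all its units in a single column (which is in that space) has $(S,w)\in\{-1,\,n-1\}$, hence $(S,w)\neq (S,u)$. By Corollary~\ref{c4} this $M_w$ cannot lie in the span of $V$, so the span has dimension at most $n+1-1=n$. Your proof is instead a direct, self-contained parametrisation: the decomposition $M_u=\mathbf 1\,e_p^{t}+\chi_{T_u}(e_q^{t}-e_p^{t})$ together with the constraint $\mathbf 1^{t}\chi_{T_u}=i+1$ places all the matrices on an affine flat of dimension $n-1$, yielding the same bound. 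The paper's approach reuses earlier machinery and would extend verbatim to more than two columns; yours needs no appeal to Lemma~\ref{v3} or Corollary~\ref{c4} and makes the reason for the drop from $n+1$ to $n$ completely explicit --- namely, fixing $(S,u)$ is a single linear condition on the indicator vector $\chi_{T_u}$.
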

\begin{proof}

By lemma \ref{v3} for $k=2$ there are at most $n+1$ linear independent matrices.
There is a matrix $M_w$ in a space for $k=2$ with one nonzero column and $(S,w)\neq (S,u)$.
Therefore fixed $(S,u)<n-1$ excludes the matrix $M_w$ from space generated by $V$.
\end{proof}

\section{Matrix $L_x$ of word $x$ with common column $q$ of
$M_x$.}
Let $S_q$ below be a rational series depending on the set
$P =\{\bf q\}$ of size one
with number one for the column $q$.

By Remark \ref{r3} the matrix $M_x$ has $(S_q,x)+1$ units in the column $q$.

\begin{df} \label{sq}

If the set of cells with units in
the column $\bf q$ of the matrix $M_v$ is a subset of the
analogous set of the matrix $M_u$ then we write

\centerline{$M_v \sqsubseteq_q M_u$}

\end{df}

\begin{df} \label{dL}
The matrix $L_x$ has column $q$ of $M_x$. For $(S_q,x)=n-i$ with $0<i\leq n$, $L_x$ has $n-i+1$ units in the column one of
the state $\bf q$
and remaining $i-1$ units in next nearest columns, as usual one unit in the column.
\end{df}

$M_x=\left(
\begin{array}{cccccccc}
  0 & 1 & 0 & 0 & 0 \\
  1 & 0 & 0 & 0 & 0 \\
  1 & 0 & 0 & 0 & 0 \\
  0 & 0 & 0 & 0 & 1 \\
  0 & 0 & 1 & 0 & 0 \\
\end{array}
\right)$
$L_x=\left(
\begin{array}{cccccccc}
  0 & 0 & 1 & 0 & 0 \\
  1 & 0 & 0 & 0 & 0 \\
  1 & 0 & 0 & 0 & 0 \\
  0 & 0 & 0 & 1 & 0 \\
  0 & 1 & 0 & 0 & 0 \\
\end{array}
\right) (S_q,x)=1$

The set of matrices $L_x$ with nonzero column  $q$ is a subset of the set of matrices of word,
having in every row one unit and the rest of zeros.

For $(S_q,t)=n-2$ and suitable numeration of columns some
$L_t=M_t$.

Many results concerning matrices $M_x$ are valid for $L_x$
because they use only matrix properties and existence of
one unit in a row.
For instance, Lemma \ref{lam}, Corollary \ref{c3a} of Lemma
\ref{v3}, Corollary \ref{c4} of Lemma \ref{v4} directly related to the set of matrices $L_x$.

\subsection{The equation with unknown $L_x$}

The solution $L_x$ of the equation
\begin{equation}
M_uL_x=M_s \label{ux}
\end{equation}
for synchronizing matrix $M_s$ and arbitrary $M_u$ must have units in the column of the state $\bf q$.

In general, there are several solutions $L_x$ of synchronizing continuations $x$ of the word $u$.

\begin{lem}  \label{l5}
Every equation $M_uL_x=M_s$ (\ref{ux}) has a solutions $L_x$ with $(S_q,x) \geq 0$.

 $|R(u)|-1=(S_q,x)$ for $L_x$ with minimal $(S_q,x)$ (a minimal solution),
every matrix $L_y$ satisfies the equation (\ref{ux}) iff
$L_x\sqsubseteq_q L_y$.

Distinct solutions $L_x$ can differ in rows corresponding zero columns of $M_u$.

The rank $|R(x)|\leq n-(S_q,x)$.

There exists one-to-one correspondence between nonzero columns of $M_u$, units in the column $q$ of minimal solution $L_x$
and the set of states $c_u=Au$ of automaton $A$.

\end{lem}

\begin{proof}
The matrix $M_s$ of rank one has column of units of the state $\bf q$.
For every nonzero column $j$ of $M_u$ with elements $u_{i,j}=1$ and $s_{i,q}=1$ in the matrix $M_s$ let the cell $(j,q)$
have unit in the matrix $L_x$.
So the unit in the column $q$ of matrix $M_s$ is a product of
every unit from the column $j$ of $M_u$
and unit in the cell $j$ of column $q$ of $L_x$.

Therefore by Remark \ref{r3} for rational series $S_q$ the minimal solution $L_x$ has in the column $q$
$(S_q,x)+1$ units, whence $(S_q,x)=|R(u)|-1$.

The set $R(u)$ of nonzero columns of $M_u$ corresponds the set of cells of the column $q$ with unit of minimal $L_x$.

So to the column $q$ of every solution belong at least $(S_q,x)+1$ units.
The units outside column $q$ of the solution $L_x$ belong to the next columns, one unit in a row.
Units in rows corresponding zero columns of $M_u$ do not imply on result in (\ref{ux}) (Remark \ref{r1})
and therefore can be placed arbitrarily.
The remaining cells obtain zero.

Lastly every solution $L_x$ of (\ref{ux}) has one unit with rest of zeros in every row and is a matrix of word.

Zeros in the cells of column $q$ of minimal $L_x$ correspond zero columns of $M_u$.
Therefore for the matrix $L_y$ such that $L_x \sqsubseteq_q L_y$
we have $M_uL_y=M_s$.
Every solution $L_y$ must have units in cells of column $q$ that correspond $|R(u)|=(S_q,x)+1$ nonzero columns of $M_u$.
Therefore $R(x)$ in (\ref{ux}) has at most $n-(S_q,x)-1$ nonzero columns besides $q$,
whence the rank $|R(x)|\leq n-(S_q,x)$.

Thus, the equality $M_uL_x=M_uL_y=M_s$ is
equivalent to $L_x \sqsubseteq_q L_y$ for the minimal $L_x$.

The matrix $M_u$ with set $R(u)$ of nonzero columns maps
the automaton on the set $c_u$ of states and on the set of units in the column $q$ of minimal $L_x$.

\end{proof}

\begin{cor} \label{c7}

For minimal solution $L_x$ of the equation
$M_uL_x=M_s$ and minimal solution $L_y$ of the equation
$M_{ut}L_y=M_s$ one has
$(S_q,y)\leq (S_q,x)$ in view of
$|R(ut)|\leq |R(u)|$ (Lemma \ref{l2}).

\end{cor}

Lemma \ref{l5} explains the following

\begin{rem} \label{r7}

Every permutation and shift of nonzero columns $M_u$
induces corresponding  permutation of the set of units in
the column $q$ of minimal solution $L_x$ of (\ref{ux}), and vice versa.
\end{rem}

\section{Right pseudoinverse matrices}

\begin{df} \label{inv}

Let us call the matrix $M_{a^-}$ of word
{\sf right pseudoinverse} matrix of the matrix $M_a$
of a word $a$ if for precisely one element $a_{i,j}=1$ of
every nonzero column $j$ of $M_a$ the cell $(j,i)$ of
$M_{a^-}$ has unit.

If in $M_{a^-}$ still are zero rows then unit is
added arbitrary in every such row of the matrix $M_{a^-}$
of word.

Let $E_a$ denote $M_aM_{a^-}$.
\end{df}

For instance,
\\
\\
\noindent$M_a=\left(
\begin{array}{cccccccc}
  0 & 1 & 0 & 0 & 0 \\
  0 & 1 & 0 & 0 & 0 \\
  0 & 0 & 0 & 0 & 1 \\
  0 & 0 & 1 & 0 & 0 \\
  0 & 0 & 1 & 0 & 0 \\
\end{array}
\right)$
$M_{a^-}=\left(
\begin{array}{cccccccc}
  0 & 1 & 0 & 0 & 0 \\
  \d{1} & 0 & 0 & 0 & 0 \\
  0 & 0 & 0 & 0 & \d{1} \\
  0 & 0 & 0 & 1 & 0 \\
  0 & 0 & \d{1} & 0 & 0 \\
\end{array}
\right)$
$M_{a^-}=\left(
\begin{array}{cccccccc}
  1 & 0 & 0 & 0 & 0 \\
  0 & \d{1} & 0 & 0 & 0 \\
  0 & 0 & 0 & \d{1} & 0 \\
  0 & 0 & 0 & 1 & 0 \\
  0 & 0 & \d{1} & 0 & 0 \\
\end{array}
\right)$

\begin{rem} \label{ri}

For invertible matrix $M_a$ with $|R(a)|=n$ we have
$M_{a^-}=M_a^{-1}$, for singular $M_a$ there are some
generalized inverse matrices, including invertible.

Pseudoinverse matrices can be considered as matrices of word in the alphabet $\Gamma^-$.

\end{rem}

\begin{lem} \label{l9}

The product $M_bM_{b^-}=E_b$ does not depend on any arbitrary placing of units in $M_{b^-}$ in empty rows.
The rank of $E_b$ is restricted by $|R(b)|$.

In the case $|R(u)|=|R(ub)|$
the product $M_uE_b$ returns all images of nonzero columns of
$M_u$ to its origin place in $M_u$.
$M_uL_x=M_s \to M_uM_bM_{b^-}L_x=M_s$.
\end{lem}

\begin{proof}
For nonzero column $i$ of $M_b$ there exists a unit
in corresponding row $i$ of $M_b^-$.

The rows with free placing of units in $M_{b^-}$ correspond zero columns of $M_b$ and therefore could imply on the product
$M_bM_{b^-}=E_b$.
Besides, all $n$ units of columns $M_b$ are moved to $E_b$,
whence units of the arbitrary placing could not imply
on $E_b=M_bM_{b^-}$.

By Lemma \ref{l2}, the rank of $M_bM_{b^-}$ is not greater than the rank $|R(b)|$ of $M_b$.
Consequently rank of $E_b$ is restricted by $|R(b)|$.

In the case $|R(u)|=|R(ub)|$ the matrix $M_b$ does
not merge columns of $M_u$ and therefore
the product $M_uE_b$ returns all columns of $M_uE_b$
 to its origin place in $M_u$, whence
$M_uL_x=M_s \to M_uM_bM_{b^-}L_x=M_s$.

\end{proof}

\begin{lem} \label{l7}
For every equation $M_uL_x=M_s$ and every letter $\beta$

\begin{equation}
 M_{u\beta}L_y=M_s      \label{yx}
\end{equation}
for minimal solutions $L_y$ and $L_x$ with
$(S_q,y) \leq (S_q,x)$.

For  $|R(u)|\neq |R(u\beta)|$ and every letter $\beta$ there exists solution $L_y$ of the equation
$M_uM_{\beta}L_y=M_{u\beta}L_y=M_s$
such that $(S_q,y)<(S_q,x)$ for minimal solutions.

For some $L_y$, there exists a matrix $M_{\beta^-}$ such that $L_y=M_{\beta^-}L_x$.

$|R(u)|=|R(u\beta)|$ implies
$(S_q,y)=(S_q,x)$ for minimal solutions $L_y$ and $L_x$.
The invertible matrix $M_{\beta^-}$ does not change number
of units of $L_x$ in every column of
$L_y=M_{\beta^-}L_x$.
$M_uL_x=M_s \to M_uM_{\beta}M_{beta^-}L_x=M_s$.
  \end{lem}

\begin{proof}

The equality in (\ref{yx}) is correct for some $L_y$.
By Lemma \ref{l2} $|R(u)|\geq |R(u\beta)|$.
Therefore by Lemma \ref{l5} $(S_q,y) \leq (S_q,x)$ for minimal solutions $L_x$ and $L_y$.

From $|R(u)|\neq |R(u\beta)|$ due to Lemma \ref{l2} follows
$|R(u\beta)|<|R(u)|$, whence for some solution $L_y$
of the equation $M_uM_{\beta}L_y=M_s$
$(S_q,y)<(S_q,x)$ for both such minimal solutions by Lemma
\ref{l5}.

In the case $(S_q,y)<(S_q,x)$ for minimal solutions $L_x$ and
$L_y$ the column $i$ of the matrix $M_{\beta}$ merges
some columns $k_1, ..,k_m$ of $M_u$.
Then let us paste in pseudoinvers matrix $M_{\beta^-}$ for every   $i$ unit in the cell $k_1,i$ and units in the cells
$(k_r,j)$ ($r>1$) for distinct zero columns $j$ of $M_u$.
Therefore the matrix $E_{\beta}$ returns a part of nonzero columns
from $R(u)$ to the origin place in the matrix $M_uM_{\beta}$.
This part has all nonzero columns of $M_uM_{\beta}$, whence
$M_uL_x=M_s=M_uM_{\beta}M_{\beta^-}L_x$.

Let $|R(u)|=|R(u\beta)|$.

According to Definition \ref{inv}, the units in the matrix
$M_{\beta^-}$ in some rows can be disposed arbitrarily.
But they could not imply on the product
$M_{\beta}M_{\beta^-}=E_{\beta}$ by Lemma \ref{l9}.
$E_{\beta}$ returns nonzero columns
from $R(u)$ to the origin place also due to Lemma \ref{l9}.
Hence the equality in

\centerline{$M_uM_{\beta}M_{\beta^-}L_x=M_{u\beta}M_{\beta^-}L_x=
M_{u\beta}L_y=M_s$}
is correct in such case for every matrix $M_{\beta^-}$
and solution $L_y=M_{\beta^-}L_x$.

Moreover, the invertible matrix $M_{\beta^-}=M_{\beta^{-1}}$ keeps the number of units of every column of $L_x$ in
$L_y=M_{\beta^-}L_x$ by Remark \ref{r4},
whence also for column $q$ one has
$(S_q,y)=(S_q,x)$ for minimal solutions $L_y$ and $L_x$.

\end{proof}

\begin{cor}  \label{c9}
A set of linear independent solutions $L_y=M_{a^-}L_x$ of
(\ref{yx}) and $L_x$ with constant $(S_q,y)=(S_q,x)$
 (and therefore $R(y)=R(x)$)
can be created by help of invertible matrices $M_{a^-}$
by words $a$ of restricted length
(Lemma \ref{l5}, Remark \ref{ri}).

So such invertible matrices $M_{a^-}$ can create even a maximal subset of linear independent matrices $L_y$ with fixed
$(S_q,y)=(S_q,x)$ and greater in a space.

\end{cor}

\section{Examples}
The coordinate $j$ in $n$-vector of subset of states $c_u$ has unit if the state $j \in c_u$ and zero in opposite case.
For instance, $(011011)$ means that the subset has states $\{2,3,5,6\}$.

Units in $c_u$ correspond nonzero columns from $R(u)$ of matrix $M_u$.
The vector $c_u$ is equal to column $q$ of solution $L_x$ of equation $M_uL_x=M_s$.

The matrices $L_x$ corresponding word $u$ of $M_u$ (or $L_v$ where $L_x\sqsubseteq_q L_v$) of fixed $(S_g,x)$
in lines of examples below are linear independent.

J. Kari \cite{Ka} discovered the following example of $n$-state automaton
with minimal synchronizing word of length $(n-1)^2$ for $n=6$.

\begin{picture}(130,70)
 \end{picture}
\begin{picture}(130,74)
\multiput(6,60)(64,0){2}{\circle{6}}
\multiput(6,13)(64,0){2}{\circle{6}}
 \multiput(22,56)(22,0){2}{a}
\multiput(16,19)(34,0){2}{a}
 \put(36,21){\circle{6}}
\put(36,48){\circle{6}}
 \put(7,14){\vector(4,1){28}}
\put(7,57){\vector(4,-1){26}}

\put(39,52){\vector(4,1){27}}
 \put(37,20){\vector(4,-1){28}}
\put(67,63){\vector(-1,0){57}}
 \put(36,64){a}
\put(67,12){\vector(-1,0){57}}
 \put(32,0){a}

\put(70,15){\vector(0,1){42}}
 \put(70,59){\vector(0,-1){42}}
\put(34,21){\vector(1,1){36}}
 \put(52,28){b}

  \put(76,22){b}
\put(76,10){3}
\put(76,60){0}
\put(27,25){5}
\put(43,38){2}

\put(25,37){b}
\put(36,48){\circle{10}}
\put(-6,10){4}
\put(0,20){b}
\put(-6,60){1}
\put(0,45){b}
\put(37,42){\vector(2,1){4}}

\put(6,60){\circle{10}}
\put(4,64){\vector(2,1){4}}

 \put(6,13){\circle{10}}
\put(7,7){\vector(2,1){4}}
 \end{picture}

The minimal synchronizing word

\centerline{$s=\it ba^2bababa^2b^2aba^2ba^2baba^2b$}
has the length at the \v{C}erny border.

Every line below presents a pair (word $u$, $n$-vector $c_u$) of the word $u$.

$(b, 111110)$ $|R(u)|=5$

$(ba, 111011)$

$(ba^2, 111101)$

$(ba^2b, 111100)$ $|R(u)|=4$

$(ba^2ba, 111010)$

$(ba^2bab, 011110)$

$(ba^2baba, 101111)$ $|R(v)|=5$ (l01011 of $c_u\subset c_v$)

$(ba^2babab, 101110)$ $|R(u)|=4$

$(ba^2bababa, 110101)$

$(ba^2bababa^2, 011101)$

$(ba^2bababa^2b, 111000)$ $|R(u)|=3$

$(ba^2bababa^2b^2, 011100)$

$(ba^2bababa^2b^2a, 110111)$ $|R(v)|=5$ (101010 of $c_u\subset c_v$)

$(ba^2bababa^2b^2ab, 001110)$ $|R(u)|=3$

$(ba^2bababa^2b^2aba, 100011)$

$(ba^2bababa^2b^2aba^2, 011111)$  $|R(v)|=5$ (010101 of $c_u\subset c_v$)

$(ba^2bababa^2b^2aba^2b, 110000)$ $|R(u)|=2$

$(ba^2bababa^2b^2aba^2ba, 011000)$

$(ba^2bababa^2b^2aba^2ba^2, 101000)$

$(ba^2bababa^2b^2aba^2ba^2b, 001101)$ $|R(v)|=3$ (001100 of $c_u\subset c_v$)

$(ba^2bababa^2b^2aba^2ba^2ba, 100010)$ $|R(u)|=2$

$(ba^2bababa^2b^2aba^2ba^2bab, 000110)$

$(ba^2bababa^2b^2aba^2ba^2baba, 001011)$  $|R(v)|=3$ (000011 of $c_u\subset c_v$)

$(ba^2bababa^2b^2aba^2ba^2baba^2, 000101)$ $|R(u)|=2$

$(ba^2bababa^2b^2aba^2ba^2baba^2b=s, 100000)$  $|R(s)|=1$

By the bye, the matrices of right subwords of $s$ are simply linear independent.

It is possible that this property is by no means rare for minimal synchronizing word.
\\
\\
For the \v{C}erny sequence of $n$-state automata \cite{Ce}, \cite{La}, \cite{Li} the situation is more pure.

\begin{picture}(300,70)
\multiput(0,54)(26,0){14}{\circle{6}}
\multiput(26,54)(26,0){13}{\circle{10}}
\multiput(24,54)(26,0){6}{\vector(-1,0){20}}
\multiput(204,54)(26,0){6}{\vector(-1,0){20}}
\put(160,54){ ....}

\multiput(11,58)(26,0){13}{a}
 \multiput(26,64)(26,0){13}{b}
 \multiput(25,58)(26,0){13}{\vector(2,1){4}}
\put(340,17){\vector(0,1){34}}
 \put(0,51){\vector(0,-1){34}}
\put(2,51){\vector(0,-1){34}}
\put(-7,34){a} \put(6,34){b} \put(330,34){a}
\multiput(0,13)(26,0){14}{\circle{6}}
\multiput(0,13)(26,0){14}{\circle{10}}
\multiput(4,13)(26,0){6}{\vector(1,0){20}}
\multiput(186,13)(26,0){6}{\vector(1,0){20}}
\put(160,13){ ....}

\multiput(12,15)(26,0){13}{a}
 \multiput(0,-3)(26,0){14}{b}
 \multiput(-3,17)(26,0){14}{\vector(2,1){4}}
 \end{picture}
\\
\\
The minimal synchronizing word

\centerline{$s=b(a^{n-1}b)^{n-2}$}
 of the automaton also has the length at the \v{C}erny border.
\\
\\
For $n=4$

\begin{picture}(140,54)
\end{picture}
\begin{picture}(140,50)
\multiput(-21,60)(60,0){2}{\circle{6}}
\multiput(-21,60)(60,0){2}{\circle{10}}

 \multiput(-21,10)(60,0){2}{\circle{6}}

\put(39,10){\circle{10}}

 \put(-21,12){\vector(0,1){42}}
\put(-19,12){\vector(0,1){42}}

 \put(34,10){\vector(-1,0){51}}
\put(-16,60){\vector(1,0){50}}
 \put(39,55){\vector(0,-1){40}}

  \multiput(-29,35)(60,0){2}{a}
\multiput(10,5)(0,60){2}{a}

\put(-20,67){2 b}
 \put(-17,-1){1}
\put(38,67){3  b}
 \put(38,-3){4  b}
\put(-18,35){b}

 \end{picture}

and synchronizing word  $baaabaaab$ with pairs of word $u$ and
$n$-vector $c_u$ of linear independent matrices $L_u$ below.

$(b, 0111)$ $|R(u)|=3$

$(ba, 1011)$

$(baa, 1101)$

$(baaa, 1110)$

$(baaba, 1010)$  $|R(u)|=2$

$(baaaba, 0011)$

$(baaabaa, 1001)$

$(baaabaaa, 1100)$ $|u|=8$

$(baaabaaab=s, 0100)$ $|R(s)|=1$
\\
\\
In the example of Roman \cite{Ro}

\begin{picture}(140,54)
\end{picture}
\begin{picture}(140,50)
\multiput(-21,39)(56,0){3}{\circle{6}}
\multiput(-21,39)(56,0){3}{\circle{10}}

 \multiput(6,10)(60,0){2}{\circle{6}}

\put(66,10){\circle{10}}
 \put(5,12){\vector(-1,1){24}}
  \put(-19,36){\vector(1,-1){24}}

\put(67,12){\vector(1,1){22}}
  \put(91,36){\vector(-1,-1){22}}

  \multiput(-16,20)(97,0){2}{c}

\put(7,12){\vector(1,1){24}}
\put(31,36){\vector(-1,-1){24}}

\put(38,36){\vector(1,-1){24}}

   \put(9,10){\vector(1,0){54}}
   \put(63,10){\vector(-1,0){54}}

\put(2,-1){3}
 \put(28,0){a}
\put(53,25){a}
 \put(11,24){b}

\put(-33,41){$5$}
\put(-14,45){$a,b$}
\put(28,48){$c$}
\put(19,41){$2$}
\put(98,45){$a,b$}
\put(88,47){$4$}
\put(60,-2){b   1}
 \end{picture}

the minimal synchronizing word

\centerline{$s=ab(ca)^2c$ $bca^2c$ $abca$}

The line below presents a pair (word $u$, $n$-vector $c_u$) of the word $u$.

$(a, 10111)$ $|R(u)|=4$

$(ab, 11011)$

$(abc, 11110)$

$(abca, 10110)$ $|R(u)|=3$

$(abcac, 10011)$

$(abcaca, 01111)$ $|R(v)|=4$  (00111 of $c_u\subset c_v$)

$(abcacac, 10101)$ $|R(u)|=3$

$(abcacacb, 11001)$

$(abcacacbc, 01110)$

$(abcacacbca, 10010)$ $|R(u)|=2$

$(abcacacbca^2, 00110)$

$(abcacacbca^2c, 10001)$

$(abcacacbca^2ca, 11101)$  $|R(v)|=4$ (00101 of $c_u\subset c_v$)

$(abcacacbca^2cab, 01001)$   $|R(u)|=2$

$(abcacacbca^2cabc, 01100)$

$(abcacacbca^2cabca=s, 10000)$ $|R(s)|=1$

\section*{Acknowledgments}
I would like to express my gratitude to Francois Gonze,
Dominique Perrin, Marie B{\'e}al, Akihiro Munemasa, Wit Forys, Benjamin Weiss, Mikhail Volkov and Mikhail Berlinkov
for fruitful and essential remarks throughout the study.

 \end{document}